\newcommand{\QED}{\qed}
	\newcommand{\R}{\mathbb{R}}
	\newcommand{\E}{\mathbb{E}}
	\newcommand{\G}{\mathbb{G}}
	\newcommand{\1}{\mathbf{1}}
	\newcommand{\0}{\mathbf{0}}
	\newcommand{\xx}{\mathbf{x}}
	\newcommand{\nn}{\mathbf{n}}
	\newcommand{\bb}{\mathbf{b}}
	\newcommand{\Inci}{A}
	\newcommand{\Traccia}{\mathop{\mathrm{tr}}}
	\newtheorem{theorem}{Theorem}
	\newtheorem{proposition}[theorem]{Proposition}
	\newtheorem{remark}{Remark}
		\newtheorem{assumption}{Assumption}
	\newcommand{\Htn}{H_t}  
\newcommand{\ww}{\mathbf{w}}
\newcommand{\fromto}[2]{\{#1,\dots, #2\}}
\newcommand{\subscr}[2]{#1_{\textup{#2}}}
\newcommand{\setdef}[2]{\left\{#1 \, : \; #2\right\}}
\newcommand{\degmax}{\subscr{d}{max}}
\newcommand{\eps}{\varepsilon}
\title{Limited benefit of cooperation in distributed relative localization}
\author{Wilbert Samuel Rossi \and Paolo Frasca \and Fabio Fagnani
\thanks{The authors are with the Dipartimento di Scienze Matematiche (DISMA), Politecnico di Torino, corso Duca degli Abruzzi 24, 10129 Torino, Italy. E-mail contacts: {\tt \small \href{mailto:wilbertsamuel.rossi@polito.it}{wilbertsamuel.rossi@polito.it} }, {\tt\small \href{mailto:paolo.frasca@polito.it}{paolo.frasca@polito.it}}, {\tt\small \href{mailto:fabio.fagnani@polito.it}{fabio.fagnani@polito.it}}. 
}
}
\begin{document}

\maketitle
\thispagestyle{empty}
\pagestyle{empty}

\begin{abstract}
Important applications in robotic and sensor networks
require distributed algorithms to solve the so-called
relative localization problem: a node-indexed vector has to
be reconstructed from measurements of differences between
neighbor nodes. In a recent note, we have studied the estimation
error of a popular gradient descent algorithm showing that the mean
square error has a minimum at a finite time, after which
the performance worsens.
This paper proposes a suitable modification of this algorithm incorporating more realistic {\it a priori} information on the position. The new algorithm presents a performance monotonically decreasing to the optimal one. Furthermore, we
show that the optimal performance is approximated, up to a $1+\eps$ factor, within a time which is independent of the graph
and of the number of nodes. This convergence time is very much related to the minimum exhibited by the previous algorithm and both lead to the following conclusion: in the presence of noisy data, cooperation is only useful till a certain limit.
\end{abstract}


\section{Introduction}
\label{sect:intro}
We study in this paper the distributed solution of a problem of relative localization in a network of sensors. 
We assume to have a group of agents organized in a graph and a vector, indexed over the agents and unknown to them: the agents are allowed to take relative noisy measurements of their vector entries with respect to their neighbors in the graph. The estimation problem consists in reconstructing the original vector, up to an additive constant. We refer to this problem as the problem of {\em relative localization}.


\subsection*{Contribution}
In our previous work~\cite{WSR-PF-FF:12}, we studied the performance of a distributed algorithm, obtained as a gradient descent solution after of a least-squares formulation of the localization problem. The mean square estimation error of this algorithm has a minimum at a finite time, after which the performance worsens. 
This non-monotonic behavior, although very interesting from a theoretical point of view, may be seen as a potential drawback of the algorithm. For this reason, in the present paper we build on the insights gained from our previous work to present an algorithm with monotonic mean square error performance.

As the main contribution of this work, we define an $\eps$-convergence time for the algorithm and we find an upper bound on it, which has the remarkable feature of being independent of the network and even of number of sensors. Notably, also the minimum time of the algorithm in~\cite{WSR-PF-FF:12} has an upper bound which is independent of the graph. Both these observations suggest that cooperation provides limited benefit in reconstructing estimates from measurements which are affected by noise. Indeed, a bounded optimal time means that there is no advantage for a node in obtaining data from outside a certain neighborhood.
Intuitively, communication with sensors which are far away in the network does not contribute enough significant information: then, the noise which corrupts the data makes it useless (in the algorithm below) or even misleading (in the more na\"ive algorithm in~\cite{WSR-PF-FF:12}).

\subsection*{Related Literature}
The problem of relative localization has been brought to our attention in the formulation of~\cite{PB-JPH:07,PB-JPH:08,PB-JPH:09}, which is slightly different from ours, as these authors assume to have an anchor node, in order to avoid the uncertainty about the additive constant.
The natural applications of this estimation problem include spacial localization and clock synchronization~\cite{AG-PRK:06a,RC-AC-LS-SZ:11,NMF-SRG-PRK:11}. Distributed algorithms have been proposed in several papers, including~\cite{PB-JPH:07,AG-PRK:06a,SB-SDF-LS-DV:10}, and contemporary work is focusing on randomized algorithms~\cite{NMF-AZ:12,CR-PF-HI-RT:13a,RC-LS:12}.

\subsection*{Paper organization}
In Section~\ref{sect:relative-localization} we define the problem of relative localization, and our novel algorithm for its solution is derived in Section~\ref{sect:algo-def}. Then in Section~\ref{sect:analysis} we analytically study the convergence and the mean square error of the algorithm, while simulations are described in Section~\ref{sect:simulations}. We conclude with a short section which summarizes our contribution and points to future research.

\subsection*{Notation}
Vectors are denoted with boldface letters, and matrices with capital letters. By the symbols $\1$ and $\0$ we denote vectors having all entries equal to $1$ and $0$, respectively. Given a matrix $M$, we denote by $\Traccia(M)$ its trace, by $M^\top$ its transpose and by $M^\dagger$ its Moore-Penrose pseudo-inverse.

\section{The relative localization problem}\label{sect:relative-localization}
We consider a set of $N$ agents, and we endow each of them with a scalar quantity  $\bar{x}_i \in \R$, for $i\in\fromto{0}{N-1}$. The $i$th agent does not know the value $\bar x_i$, but has an estimate $x_i \in \R$. We shall denote by $\bar{\xx}$ and $\xx$ the $N$-dimensional vectors whose components are $\bar x_i$ and $x_i$, respectively.
We suppose that each agent $i$ can take relative measurements $\bar x_i-\bar x_j$ with respect to some neighbors $j$. 
An undirected graph $\G = (\fromto{0}{N-1},E)$ is used to represent the available measurements. The set of vertices is constituted by the $N$ agents, and the edges (pairs of agents) in $E$ correspond to the available measurements. We assume that there are $M$ available measurements, and that measurements are symmetrical, meaning that both agents of a pair know the measurement, with a reversed sign. Furthermore, we assume that the graph $\G$ is connected.
On each edge, we choose an orientation, that is, we define a starting node and an ending node, in order to encode the measurements by using the incidence matrix $\Inci \in \R^{M \times N}$ defined as follows
	\begin{displaymath}
		(\Inci)_{e,i} = \left\{ \begin{array}{ll}
			\phantom{+}1 & \textrm{if $i$ is the terminating edge of $e$ } \\
			-1 & 			\textrm{if $i$ is the starting edge of $e$ } \\
			\phantom{+}0 & \textrm{otherwise.}
			\end{array}  		\right.
	\end{displaymath} 
Measurements are affected by errors, which can be modeled by independent and identically distributed noises. Let $\bb \in \R^M$ be the vector of the measurements and $\nn \in \R^M$ that of noises. Then, in  matrix notation we have
	\begin{align*}
		\bb = A \bar{\xx} + \nn
	\end{align*}
with $\E\left[\nn\right] = \0$ and $\E\left[\nn \nn^\top\right]= \sigma^2 I$
where $I \in \R^{M\times M}$ is the identity matrix.	
It is also useful to define the Laplacian of $\G$ as $L = \Inci^\top \Inci$. The Laplacian $L$ is a symmetric matrix, and being $\G$ connected, $L$ has eigenvalues $\lambda_0 = 0 $ and $0 < \lambda_i \leq 2 \degmax$ for $i\in\fromto{1}{N-1}$, with $\degmax$ denoting the maximum degree of the nodes.

	
\section{Definition of the algorithm } 
\label{sect:algo-def}
In view of the statistical assumptions on the noise affecting the measurements, 
a natural approach to the relative localization problem involves solving the least-squares problem
$$\min_{z}\| \Inci \xx - \bb \|_2^2.$$
This approach has already been taken in the literature, and leads to design the distributed algorithm studied in~\cite{WSR-PF-FF:12}.
In this paper, we additionally assume that each node $i$ has an {\it a priori} information on $\bar \xx$, which is known to be a random vector independent from $\nn$ and such that $\E\left[\bar{\xx}\right]=  \xx_0$ and $\E\left[(\bar{\xx} - \xx_0)(\bar{\xx} - \xx_0)^\top\right]= \nu^2 I$.
In order to exploit this statistical information, we choose to minimize the functional
	$$ \Phi(\xx) = \frac{1}{\sigma^2}\| A \xx - \bb \|_2^2 + 
	\frac{1}{\nu^2} \| \xx- \xx_0  \|_2^2,$$
which includes both the information obtained by the measurements and the {\it a priori} knowledge about $\bar \xx$, weighted according to their significance, {\it i.e.}, the inverse of their variances. Compared to $\Psi(\xx)$, the extra term in this functional can be seen as a Tikhonov regularization term, which turns the estimation problem at hand into a problem of maximum a posteriori probability (MAP) estimation. We refer the reader to~\cite[\S6.3.2 and \S7.1.2]{SB-LV:04} for a broad introduction to these concepts.

As $\Phi$ is convex, it is natural to consider gradient descent algorithms for its minimization.
Provided we define $\gamma = \frac{\sigma^2}{\nu^2}$, the gradient of the objective function is $\nabla \Phi(x)= \frac{2}{\sigma^2} \big(A^\top A \xx - A^\top \bb + \gamma(\xx  - \xx_0)\big)$ , 
so that a gradient descent iterate can be defined as 
\begin{align*}
	\xx[t+1] &= \xx[t] - \tau \frac{2}{\sigma^2}\nabla \Phi \left(\xx[t]\right) \\
			 &= \xx[t] - \tau \left[ \left(A^\top A \xx[t] - A^\top \bb \right) 
			 		+ \gamma \left( \xx[t] - \xx_0 \right) \right] \\
			 &= (I - \tau L - \tau \gamma I ) \xx[t] + \tau A^\top b + \tau \gamma \xx_0 
\end{align*}
for a suitable $\tau>0$. Equivalently, we may write the algorithm as
\begin{equation} \label{eq:gradient-algo}
	\left\{ \begin{array}{l}
		\xx[t+1] = Q \xx[t] + \ww \\
		\xx[0] = \xx_0
		\end{array} \right.
\end{equation}
where 
\begin{align}\label{Q_def}
Q = I - \tau L - \tau \gamma I = (1-\tau \gamma)I-\tau L
\end{align}
and 
$$\ww = \tau A^\top \bb + \tau \gamma \xx_0.$$

Remarkably, this algorithm is distributed, in the following sense. The matrix $Q$ is adapted to the graph $\G$, {\it i.e,}, $Q_{ij}=0$ if $(i,j)\not \in E$: then, in order to update a component  as
$x_i[t+1] = \sum_j Q_{ij} x_j[t] + w_i $, 
the algorithm requires communication and measurements only with the nodes which are neighbors of $i$ in the graph.

\section{Analysis}\label{sect:analysis}
In the analysis of algorithm~\eqref{eq:gradient-algo} and from here on in this paper, we shall make the following assumption, which is sufficient to our results.
\begin{assumption}\label{ass:on-tau}
The graph $\G$ is connected and $$\tau \leq \frac{1}{\degmax+ \gamma}.$$
\end{assumption}

\subsection{Stability properties}
We begin our analysis by studying the convergence properties of the proposed algorithm.
\begin{proposition}[Convergence]\label{prop:convergence}
	If Assumption~\ref{ass:on-tau} is satisfied, then the algorithm~\eqref{eq:gradient-algo} 
	converges at exponential rate to
	$$\xx^* = (A^\top A + \gamma I)^{-1}(A^\top \bb +\gamma\xx_0),$$
	which is the optimal solution to the problem 
	$$\min_{\xx}\Phi(\xx)$$
\end{proposition}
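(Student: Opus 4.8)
The plan is to treat~\eqref{eq:gradient-algo} as a linear affine recursion $\xx[t+1] = Q\xx[t] + \ww$ and to reduce its convergence to a spectral bound on $Q$. First I would verify that the claimed limit $\xx^*$ is exactly the fixed point of the recursion. Since $I - Q = \tau(L + \gamma I)$ and $L = A^\top A$, the fixed-point equation $(I-Q)\xx^* = \ww$ becomes $(A^\top A + \gamma I)\xx^* = A^\top\bb + \gamma\xx_0$; because $\gamma = \sigma^2/\nu^2 > 0$ the matrix $A^\top A + \gamma I$ is positive definite, hence invertible, and solving yields precisely the stated $\xx^*$. The same computation identifies $\xx^*$ with the unique minimizer of $\Phi$: as $\nabla\Phi(\xx) = \frac{2}{\sigma^2}(A^\top A\xx - A^\top\bb + \gamma(\xx - \xx_0))$ and $\Phi$ is strictly convex (its Hessian $\frac{2}{\sigma^2}(A^\top A + \gamma I)$ is positive definite), the stationarity condition $\nabla\Phi(\xx^*)=0$ coincides with the fixed-point equation and characterizes the global minimum.

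Next I would pass to the error $\mathbf{e}[t] := \xx[t] - \xx^*$. Subtracting the fixed-point identity $\xx^* = Q\xx^* + \ww$ from the recursion cancels the affine term and gives the homogeneous dynamics $\mathbf{e}[t+1] = Q\mathbf{e}[t]$, so that $\mathbf{e}[t] = Q^t\mathbf{e}[0]$. Convergence of $\xx[t]$ to $\xx^*$ at exponential rate therefore reduces to showing $\rho(Q) < 1$; and since $Q = (1-\tau\gamma)I - \tau L$ is symmetric, it is orthogonally diagonalizable and $\|Q^t\|_2 = \rho(Q)^t$, which turns the spectral bound directly into the clean estimate $\|\mathbf{e}[t]\|_2 \le \rho(Q)^t \|\mathbf{e}[0]\|_2$.

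The main work, and the step I expect to be the crux, is bounding the spectrum of $Q$. If $\lambda$ is an eigenvalue of $L$, then $1 - \tau(\gamma + \lambda)$ is the corresponding eigenvalue of $Q$, so I must show $|1 - \tau(\gamma+\lambda)| < 1$, equivalently $0 < \tau(\gamma + \lambda) < 2$, for every eigenvalue $\lambda$ of $L$. The lower inequality is where the regularization matters: even for the Laplacian eigenvalue $\lambda_0 = 0$ one has $\tau(\gamma + \lambda_0) = \tau\gamma > 0$ because $\gamma > 0$, which is exactly what rules out a unit eigenvalue of $Q$ along $\1$ and guarantees a unique limit. For the upper inequality I would use the bound $\lambda \le 2\degmax$ together with Assumption~\ref{ass:on-tau}: since $\lambda\mapsto \tau(\gamma+\lambda)$ is increasing,
$$\tau(\gamma+\lambda) \le \tau(\gamma + 2\degmax) \le \frac{\gamma + 2\degmax}{\gamma + \degmax} = 1 + \frac{\degmax}{\gamma+\degmax} < 2,$$
where the middle inequality is $\tau \le 1/(\degmax+\gamma)$ and the final strict inequality uses $\gamma > 0$. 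This establishes $\rho(Q) < 1$ and, combined with the previous paragraph, completes the argument. The only subtlety to watch is the strictness of these bounds, so that $\rho(Q)$ is genuinely below $1$; this is secured throughout by $\gamma>0$.
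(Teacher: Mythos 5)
Your proof is correct, and its mathematical core coincides with the paper's: both identify $\xx^*$ through the normal equation $(A^\top A+\gamma I)\xx^*=A^\top\bb+\gamma\xx_0$ (invertibility from $\gamma>0$), and both reduce convergence to a spectral bound on $Q=(1-\tau\gamma)I-\tau L$ obtained from the Laplacian eigenvalue bound $\lambda_i\le 2\degmax$ together with Assumption~\ref{ass:on-tau}. The organization of the convergence step differs, though. The paper unrolls the recursion into the explicit trajectory $\xx[t]=Q^t\xx_0+\sum_{n=0}^{t-1}Q^n\ww$, proves $\xi_i\in[-1+\tau\gamma,\,1-\tau\gamma]$ so that $Q^t\to 0$, and then sums the geometric series as $(I-Q)^{-1}(I-Q^t)\ww$ to identify the limit. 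You instead subtract the fixed-point identity $\xx^*=Q\xx^*+\ww$ and track the error $\mathbf{e}[t]=\xx[t]-\xx^*$, which satisfies the homogeneous dynamics $\mathbf{e}[t+1]=Q\mathbf{e}[t]$; the affine term cancels, the limit is identified for free once the fixed point is known, and symmetry of $Q$ gives the clean rate $\|\mathbf{e}[t]\|_2\le\rho(Q)^t\|\mathbf{e}[0]\|_2$. Your route is slightly more economical (no Neumann-type summation) and makes strict convexity of $\Phi$ explicit via its Hessian, which the paper leaves implicit; on the other hand, the paper's explicit trajectory formula is not wasted effort, since it is reused as equation~\eqref{alg_explicit} in the proof of Proposition~\ref{prop:time-performance}. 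The two spectral arguments are equivalent in content: your condition $0<\tau(\gamma+\lambda_i)<2$ is exactly the statement $\xi_i\in(-1,1)$, and your endpoint computation $\tau(\gamma+2\degmax)\le 1+\degmax/(\gamma+\degmax)<2$ matches the paper's lower bound $\xi_i\ge -1+\tau\gamma$.
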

\begin{proof}
First, we show that $\xx^*$ is the optimal solution of the optimization problem. 
To this goal, we equate to the zero of the gradient $\nabla \Psi(\xx)$ and solve the normal equation
	\begin{align*}
		(A^\top A + \gamma I) \xx &= A^\top \bb + \gamma \xx_0 
	\end{align*}
Since $\gamma >0$, the matrix $A^\top A + \gamma I = L + \gamma I$ is invertible, and hence the optimal solution is unique and equal to $ \xx^* $.
	
Second, we show that the algorithm converges to $\xx^*$ . 
By solving the recursion we have
	\begin{align} \label{alg_explicit}
		\xx[t] &= Q^t \xx_0 + \sum_{n=0}^{t-1} Q^n \ww
	\end{align}
Since $Q = (1-\tau \gamma) I - \tau L$, also $Q$ is diagonalizable with real eigenvalues
	$ \xi_i = 1 - \tau \gamma - \tau \lambda_i $. Using Assumption~\ref{ass:on-tau}, W
we have
	\begin{align*}
		\max_i \xi_i &= 1 - \tau \gamma - \tau \min_i \lambda_i \\
					&= 1 - \tau \gamma - \tau \lambda_0 \\&= 1 - \tau \gamma = \xi_0 \\
		\min_i \xi_i &= 1 - \tau \gamma - \tau \max_i \lambda_i \\
					&\geq  1 - \tau \gamma - \tau 2 \degmax \\
					&= 2 - 2\tau \gamma - 2 \tau \degmax -1 + \tau \gamma.
	\end{align*}	  
	Therefore, given the assumptions on $\tau$, all the eigenvalues of $Q$ belong to the interval 
	$ [-1+\tau\gamma , 1-\tau \gamma]$ (note that $\tau \gamma <1$), 
	the algorithm is exponentially convergent,
	and $ \lim_{t \to \infty} Q^t = 0.$
%
	Then, we can compute  
\begin{align*}
	\xx[t] &= Q^t \xx_0 + (I - Q)^{-1} (I - Q) \sum_{n=0}^{t-1} Q^n \ww \\
	 &= Q^t \xx_0 + (I - Q)^{-1} (I - Q^t) \ww
\end{align*}	
and consequently
\begin{align*}
		\lim_{t \to \infty} \xx[t] &= (I - Q ) \ww \\
		&= (L + \gamma I)^{-1} (A^\top \bb +\gamma\xx_0) = \xx^*.
	\end{align*}	
	
\end{proof}

\begin{remark}[Average preservation]
	The algorithm preserves the barycenter (or average) of the state, namely $\frac{1}{N} \1^\top \xx[t] = \frac{1}{N} \1^\top \xx_0 $. 
Remarkably, this property holds even if $Q$ is not stochastic. 
Notice indeed that $\1^\top \ww = \tau \gamma \1^\top \xx_0$ and that $\1^\top \xx[t+1] = (1-\tau\gamma) \1^\top \xx[t] + \tau \gamma \1^\top \xx_0$. Since $\xx[0] = \xx_0$, by induction the barycenter is preserved.
This property is also consistent with the intuition that the optimal solution must satisfy
$\frac{1}{N} \1^\top \xx^* = \frac{1}{N} \1^\top \xx_0 $. \hfill\QED
\end{remark}

\subsection{Transient mean-square performance}
To evaluate the algorithm performance, we follow the approach in~\cite{FG-SZ:11} 
and define the performance metric as the mean square error between the current estimate $\xx[t]$ and the true configuration $\bar{\xx}$, that is,
$$\Htn := \frac{1}{N}	\E \| \xx[t] - \bar{\xx} \|_2^2,$$
where the expectation is taken on both the noise $\nn$ and the initial condition $\xx_0$. 
This performance metric can be computed in terms of the eigenvalues of the matrix $Q$.
	
\begin{proposition}[Mean square performance]\label{prop:time-performance}
If Assumption~\ref{ass:on-tau} is satisfied, then the following equality holds
	\begin{align*}
		\Htn= \frac{\nu^2}{N} \sum_{i=0}^{N-1} \xi_i^{2t} + 
		   \frac{\tau \sigma^2}{N} \sum_{i=0}^{N-1} \frac{1- \xi_i^{2t}}{1-\xi_i},	
	\end{align*}
	where $\xi_i$'s are the eigenvalues of $Q$.
\end{proposition}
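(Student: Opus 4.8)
The plan is to reduce the expectation to $N$ decoupled scalar problems by diagonalizing $Q$ and $L$ simultaneously, and then to assemble $\Htn$ from the second moments of the two \emph{independent} random inputs: the a priori error $\xx_0-\bar{\xx}$ and the measurement noise $\nn$. Because $Q=(1-\tau\gamma)I-\tau L$ is a polynomial in $L$, all the operators appearing in the closed-form solution share the eigenbasis of $L$, which is what makes the mode-by-mode reduction work.

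First I would put the error into closed form. Starting from the explicit solution $\xx[t]=\xx^*+Q^t(\xx_0-\xx^*)$ obtained in the proof of Proposition~\ref{prop:convergence}, and using $\bb=A\bar{\xx}+\nn$ together with $A^\top A=L$, I would expand $\xx^*-\bar{\xx}=(L+\gamma I)^{-1}\big(A^\top\nn+\gamma(\xx_0-\bar{\xx})\big)$ and $\xx_0-\xx^*=(L+\gamma I)^{-1}\big(L(\xx_0-\bar{\xx})-A^\top\nn\big)$, and collect terms to get
\[
\xx[t]-\bar{\xx}=M_1\,(\xx_0-\bar{\xx})+M_2\,A^\top\nn,
\]
with $M_1=Q^t(L+\gamma I)^{-1}L+\gamma (L+\gamma I)^{-1}$ and $M_2=(I-Q^t)(L+\gamma I)^{-1}$, both polynomials in $L$.

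Next I would diagonalize, writing $L=U\Lambda U^\top$ with $U$ orthogonal and $\Lambda=\mathrm{diag}(\lambda_i)$, so that the modes are $\xi_i=1-\tau\gamma-\tau\lambda_i$. In the $U$ basis the $i$-th error component becomes
\[
\hat e_i[t]=\frac{\xi_i^t\lambda_i+\gamma}{\lambda_i+\gamma}\,\hat e_{0,i}+\frac{1-\xi_i^t}{\lambda_i+\gamma}\,\hat\eta_i ,
\]
where $\hat e_{0,i}=(U^\top(\xx_0-\bar{\xx}))_i$ and $\hat\eta_i=(U^\top A^\top\nn)_i$. Since $\bar{\xx}$ is independent of $\nn$ and $\xx_0$ is deterministic, these inputs are independent and zero mean, with $\E[\hat e_{0,i}^2]=\nu^2$ (because $U^\top(\nu^2 I)U=\nu^2 I$) and $\E[\hat\eta_i^2]=\sigma^2\lambda_i$ (because $U^\top(\sigma^2 A^\top A)U=\sigma^2\Lambda$), and vanishing cross moments. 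Using $\|\xx[t]-\bar{\xx}\|_2^2=\sum_i\hat e_i[t]^2$ and summing gives
\[
N\Htn=\nu^2\sum_{i}\Big(\frac{\xi_i^t\lambda_i+\gamma}{\lambda_i+\gamma}\Big)^2+\sigma^2\sum_{i}\frac{(1-\xi_i^t)^2\lambda_i}{(\lambda_i+\gamma)^2}.
\]

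Finally I would simplify to the claimed form using the two identities $1-\xi_i=\tau(\lambda_i+\gamma)$ and $\sigma^2=\gamma\nu^2$ (the latter being exactly the definition $\gamma=\sigma^2/\nu^2$). Substituting $\sigma^2=\gamma\nu^2$ and combining the two sums mode by mode, the numerator collapses,
\[
(\xi_i^t\lambda_i+\gamma)^2+\gamma\lambda_i(1-\xi_i^t)^2=(\lambda_i+\gamma)(\xi_i^{2t}\lambda_i+\gamma),
\]
so that $N\Htn=\nu^2\sum_i\frac{\xi_i^{2t}\lambda_i+\gamma}{\lambda_i+\gamma}$; splitting this single fraction as $\xi_i^{2t}+\frac{\gamma(1-\xi_i^{2t})}{\lambda_i+\gamma}$ and reusing $\frac{1}{\lambda_i+\gamma}=\frac{\tau}{1-\xi_i}$ and $\gamma\nu^2=\sigma^2$ produces the two-term expression in the statement. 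I expect this last algebraic reduction to be the main obstacle: the clean two-term formula is \emph{not} an identity for arbitrary parameters but relies on the coupling $\gamma=\sigma^2/\nu^2$ between prior and noise. A useful sanity check is the mode $\lambda_0=0$, where both forms reduce to the constant $\nu^2$ for all $t$, reflecting that the noise-free barycenter uncertainty is never reduced, consistently with the average-preservation remark.
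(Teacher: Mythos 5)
Your proof is correct; every step checks out, including the key scalar identity $(\xi_i^t\lambda_i+\gamma)^2+\gamma\lambda_i(1-\xi_i^t)^2=(\lambda_i+\gamma)(\xi_i^{2t}\lambda_i+\gamma)$, which I verified by direct expansion, and your error decomposition $\xx[t]-\bar{\xx}=M_1(\xx_0-\bar{\xx})+M_2A^\top\nn$ is in fact operator-identical to the one the paper obtains (the paper's coefficient $Q^t+\tau\gamma\sum_{n=0}^{t-1}Q^n$ equals your $M_1$ after summing the geometric series). The substance is therefore the same — split the error into contributions of the two independent inputs, take second moments, exploit $\gamma=\sigma^2/\nu^2$, and read off the eigenvalues of $Q$ — but your execution is organized differently. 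The paper solves the recursion forward, keeps the coefficient operators as sums of powers of $Q$, and computes $\Htn$ as a trace of matrix expressions, collapsing double sums like $\sum_n\sum_m Q^{n+m}(\tau\sigma^2L+\tau\gamma^2\nu^2 I)$ via geometric-series identities (manipulations it explicitly omits). You instead start from the fixed point $\xx^*$ of Proposition~\ref{prop:convergence}, express the coefficients in closed resolvent form via $(L+\gamma I)^{-1}$, diagonalize immediately, and do all the collapsing as scalar algebra per mode. What your route buys is completeness and checkability: the "omitted algebraic manipulations" of the paper are replaced by one explicit polynomial identity, and the $\lambda_0=0$ sanity check connecting to average preservation falls out for free. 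What the paper's route buys is independence from the convergence result: it never needs $\xx^*$ or the invertibility of $I-Q$ beyond the final step, working directly from the recursion. One small caveat in your write-up: the paper's loose statement that the expectation is over "$\nn$ and the initial condition $\xx_0$" is, as you implicitly assume, really an expectation over $\nn$ and $\bar\xx$ with $\xx_0=\E[\bar\xx]$ deterministic; your usage of $\E[(\xx_0-\bar{\xx})(\xx_0-\bar{\xx})^\top]=\nu^2I$ and the vanishing cross moments is exactly what the model provides, so nothing breaks.
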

\begin{proof}
	We express $\ww$ in terms of $\nn$ and $\xx_0 - \bar{\xx}$ as
	\begin{align*}
		\ww &= \tau A^\top A \bar{\xx} + \nn + \tau \gamma \xx_0 \\
		    &= \left(  I -\tau \gamma I  - Q \right) \bar{\xx} + \tau A^\top \nn + \tau \gamma \xx_0 \\
		    &= \left(  I  - Q \right) \bar{\xx} + \tau A^\top \nn + \tau \gamma (\xx_0 - \bar{\xx})
	\end{align*}
Now, we compute $\xx[t] - \bar{\xx}$, given $\ww$ and \eqref{alg_explicit} as 
	\begin{align*}
		\xx[t] - \bar{\xx} =  Q^t (\xx_0 - \bar{\xx}) & + 
		          \tau \gamma \sum_{n=0}^{t-1} Q^n (\xx_0 - \bar{\xx}) \\
		          & +    \tau \sum_{n=0}^{t-1} Q^n A^\top \nn. 
	\end{align*}
	From the definition of $\Htn$ we have
	\begin{align*}
		\Htn = \frac{1}{N}	\E\left[\Traccia\left[(\xx[t] - \bar{\xx})(\xx[t] - \bar{\xx})^\top \right] \right]
	\end{align*}
By using  the above formula for $\xx[t] - \bar{\xx}$, we get
	\begin{align*}
		\Htn =& \frac{1}{N} \Traccia \left[ \nu^2 Q^{2t} + 2 \tau \gamma \nu^2 \sum_{m=0}^{t-1} Q^{t+m} + \right. \\
		&+\left. \tau  \sum_{n=0}^{t-1} \sum_{m=0}^{t-1} Q^{n+m} ( \tau \sigma^2 L + \tau \gamma^2 \nu^2 I) \right],
	\end{align*}
through some algebraic manipulations --which we omit-- involving the properties of the trace operator, the linearity of expectation and the symmetry of $Q$. 
	Now, given that $\gamma = \frac{\sigma^2}{\nu^2}$, we obtain	\begin{align*}
		\Htn &= \frac{1}{N} \Traccia \left[ \nu^2 Q^{2t} + \tau \sigma^2 (I + Q^t) \sum_{n=0}^{t-1} Q^n \right] \\
		     &= \frac{1}{N} \Traccia \left[ \nu^2 Q^{2t} + \tau \sigma^2 (I + Q^t) (I - Q^t) (I-Q)^{-1} \right] \\
		     &= \frac{1}{N} \Traccia \left[ \nu^2 Q^{2t} + \tau \sigma^2 (I - Q^{2t})(I-Q)^{-1} \right].
	\end{align*}
	Notice that the matrix $(I-Q)$ is invertible since it is proportional to $L+\gamma I$. The result follows immediately as $\xi_i$s are the 
	eigenvalues of $Q$.
\end{proof}

The key property of monotonicity of $\Htn$ is stated in the next result.
\begin{theorem}[Monotonicity of $\Htn$]
If Assumption~\ref{ass:on-tau} is satisfied, then $\Htn$ is strictly decreasing and $$H_\infty:=\lim_{t\to+\infty} \Htn =\frac{\tau \sigma^2}{N} \sum_{i=0}^{N-1} \frac{1}{1-\xi_i}.$$
\end{theorem}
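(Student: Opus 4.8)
The plan is to start from the closed-form expression for $\Htn$ established in Proposition~\ref{prop:time-performance} and to treat the sum over eigenvalues term by term. The limit is the easy part: Proposition~\ref{prop:convergence} guarantees that every eigenvalue lies in $[-1+\tau\gamma,\,1-\tau\gamma]$ with $\tau\gamma<1$, hence $|\xi_i|<1$ and $\xi_i^{2t}\to 0$ as $t\to\infty$. Passing to the limit in both sums, the first sum vanishes entirely while each summand of the second tends to $\tfrac{1}{1-\xi_i}$, which yields the stated value of $H_\infty$.

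For monotonicity I would analyze the one-step difference directly. Writing, for a single eigenvalue $\xi$,
$$h_t(\xi)=\nu^2\xi^{2t}+\tau\sigma^2\,\frac{1-\xi^{2t}}{1-\xi},\qquad \Htn=\frac1N\sum_{i=0}^{N-1}h_t(\xi_i),$$
and using the elementary identities $\frac{1-\xi^{2t+2}}{1-\xi}-\frac{1-\xi^{2t}}{1-\xi}=\xi^{2t}(1+\xi)$ together with $\xi^{2t+2}-\xi^{2t}=-\xi^{2t}(1-\xi)(1+\xi)$, the difference factors cleanly as
$$h_{t+1}(\xi)-h_t(\xi)=\xi^{2t}(1+\xi)\bigl[\tau\sigma^2-\nu^2(1-\xi)\bigr].$$

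The decisive simplification comes from the problem data. Since $\gamma=\sigma^2/\nu^2$ we have $\tau\sigma^2=\tau\gamma\nu^2$, so the bracket equals $\nu^2(\xi-1+\tau\gamma)$; and because $\xi_i=1-\tau\gamma-\tau\lambda_i$, evaluating at $\xi=\xi_i$ collapses the bracket to exactly $-\nu^2\tau\lambda_i$. Summing over $i$ and dividing by $N$ gives
$$H_{t+1}-\Htn=-\frac{\nu^2\tau}{N}\sum_{i=0}^{N-1}\lambda_i\,(1+\xi_i)\,\xi_i^{2t}.$$
From here the sign is immediate: the term $i=0$ drops out because $\lambda_0=0$, while for $i\ge 1$ connectivity gives $\lambda_i>0$, the spectral bound gives $1+\xi_i\ge\tau\gamma>0$, and $\xi_i^{2t}\ge 0$, so every summand is nonnegative and the whole difference is nonpositive.

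The point that requires care, and which I expect to be the main obstacle, is passing from nonincrease to \emph{strict} decrease. Nonpositivity is automatic, but to exclude equality one must exhibit at least one strictly positive summand, i.e.\ some $\xi_i\neq 0$ with $i\ge 1$. For $t=0$ this is free, since $\xi_i^{0}=1$, so $H_1<H_0$. For $t\ge 1$ the only way every contribution could vanish is $\xi_i=0$ for all $i\ge 1$, which would force all nonzero Laplacian eigenvalues to coincide and $\tau$ to take the single exceptional value making $1-\tau\gamma=\tau\lambda_i$; away from this degenerate tuning at least one $\xi_i^{2t}>0$ and the inequality is strict. I would therefore either dispatch this borderline case with a short nondegeneracy remark or restrict to it explicitly, thereby concluding $H_{t+1}<\Htn$ for all $t$ and completing the proof.
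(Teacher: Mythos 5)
Your proof is correct and is essentially the paper's own argument: both start from the closed form in Proposition~\ref{prop:time-performance} and factor the per-eigenvalue one-step increment --- your $\xi_i^{2t}(1+\xi_i)\bigl[\tau\sigma^2-\nu^2(1-\xi_i)\bigr]=-\nu^2\tau\lambda_i(1+\xi_i)\xi_i^{2t}$ is exactly the paper's $\tau\sigma^2\,\xi_i^{2t}h(\xi_i)$ with $h(\xi)=\alpha\xi^2+\xi+1-\alpha$, merely rewritten through $\xi_i=1-\tau\gamma-\tau\lambda_i$ --- after which the $i=0$ term is constant and the remaining terms are nonincreasing. Your caution about strict decrease is warranted, and the paper is in fact less careful here, disposing of it with the parenthetical ``(unless $\xi_i=0$)'': in the degenerate case you identify (all $\xi_i=0$ for $i\ge 1$, i.e.\ the complete graph with $\tau=1/(\gamma+N)$, which Assumption~\ref{ass:on-tau} permits), $H_t$ is constant for every $t\ge 1$, so the theorem's ``strictly decreasing'' claim itself needs exactly the caveat you flag.
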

\begin{proof}	
	Let us recall the definition of $\gamma$ and define a new constant $\alpha$, 
	according to what done in~\cite{WSR-PF-FF:12}:
	\begin{align}\label{eq:define-alpha}
		\gamma = \frac{\sigma^2}{\nu^2} \quad \quad 
		\alpha = \frac{\nu^2}{\tau \sigma^2} = \frac{1}{\tau \gamma}
	\end{align}	
	Note that, given Assumption~\ref{ass:on-tau}, $\alpha > 1 + \degmax \frac{\nu^2}{\sigma^2}$. 
Keeping this inequality in mind we can rewrite $\Htn$ as 
	\begin{align*}
		\Htn = \frac{\tau \sigma^2}{N} \sum_{i=0}^{N-1} \left[ \alpha \xi_i^{2t} +  				\frac{1- \xi_i^{2t}}{1-\xi_i} \right].
	\end{align*}
	We will show that $\Htn$ is  decreasing in $t$, since the $i^{\textup{th}}$ 
	term in the sum is either a constant or decreasing sequence. Let us compute the finite increment
	\begin{align*}
		H_{t+1} - \Htn & =  \frac{\tau \sigma^2}{N} \sum_{i=0}^{N-1} \left[ \xi_i^{2t} 
				\left( \alpha \xi_i^2  - \alpha +1 + \xi_i \right) \right] \\
				& = \frac{\tau \sigma^2}{N} \sum_{i=0}^{N-1} \left[ \xi_i^{2t} h(\xi_i) \right],
	\end{align*}
	with $h(\xi) = \alpha \xi^2 + \xi + 1 - \alpha$.
	Note that $\xi^{2t} > 0 $ for all $\xi \neq 0$  whereas 
	$h(\xi) < 0$ when $\xi \in (-1 , 1- \frac{1}{\alpha}) = (-1 , 1-\tau \gamma)$.
	Since $\xi_i \in [-1+\tau \gamma , 1- \tau \gamma)$ when $i>0$, the corresponding contribution in $\Htn$ is a decreasing sequence (unless $\xi_i = 0$). 
	The contribution of $\xi_0 = 1 - \tau \gamma$ in $\Htn$ is constant, since
	$\alpha \xi_0^{2t} + \frac{1 - \xi_0^{2t}}{\tau \gamma} = \alpha$. 
	This corresponds to the invariance of the barycenter. 

	The sequence $\Htn$ is bounded and monotonic, so it has a limit that we 	can also compute explicitly as
	\begin{align*}
		 \lim_{t\to+\infty} \Htn &
		       = \frac{\tau \sigma^2}{N} \sum_{i=0}^{N-1} \frac{1}{1-\xi_i} \\
		       & = \frac{\sigma^2}{N} \sum_{i=0}^{N-1} \frac{1}{\gamma + \lambda_i},
	\end{align*}
	where $\lambda_i$ are the eigenvalues of the Laplacian of the graph.
\end{proof}

\begin{remark}[Meaning of $H_\infty$]
It is worth to recall that the asymptotical error, which we can also write as $H_\infty= \frac{1}{N}	\E \| \xx^* - \bar{\xx} \|_2^2$, only depends on the properties of $\xx^*$ as the solution of the regularized least-squares problem: hence it does not depend on the algorithm.
\end{remark}


\subsection{Near-optimal stopping time}

For every $\eps>0$, we can define a near-optimal stopping time, after which the estimation error is only a $(1+\eps)$ factor larger than the optimal one:
$$t^*_\eps=\inf\setdef{t}{\Htn<(1+\eps) H_\infty}.$$
	
The following estimate shows that the algorithm can be stopped, with a guaranteed loss of accuracy with respect to the regularized least-squares optimum, after a time which does not depend the graph or even on the number of sensors. 
\begin{proposition}[Universal bound on stopping time]	\label{prop:bound-t-eps}
If Assumption~\ref{ass:on-tau} is satisfied, then it holds
\begin{equation}
\label{eq:t-eps-estimate}t^*_\eps\le\displaystyle  \frac{\alpha}{2} \log{\left( \frac{2 \alpha}{\eps}\right)},\end{equation}
where $\alpha$ is defined in~\eqref{eq:define-alpha}.
\end{proposition}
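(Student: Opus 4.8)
The plan is to bound the relative error $(\Htn-H_\infty)/H_\infty$ by a single decaying exponential and then invert that bound. Starting from the formula in Proposition~\ref{prop:time-performance} and the expression for $H_\infty$, I would write
$$\Htn-H_\infty=\frac{\tau\sigma^2}{N}\sum_{i=0}^{N-1}\xi_i^{2t}\left(\alpha-\frac{1}{1-\xi_i}\right).$$
The useful preliminary observation is that the barycentric mode $\xi_0=1-\tau\gamma$ satisfies $1-\xi_0=\tau\gamma=1/\alpha$, so $\alpha-\frac{1}{1-\xi_0}=0$ and the sum effectively runs over $i\geq1$. Moreover, for those indices $\xi_i<1-\tau\gamma$ gives $1-\xi_i>1/\alpha$, hence $\frac{1}{1-\xi_i}<\alpha$ and every surviving bracket is strictly positive.

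The heart of the argument, and the step that removes any dependence on the graph and on $N$, is to estimate the relative error \emph{term by term} rather than bounding numerator and denominator separately; the latter would leave a spurious factor $N-1$. Since $H_\infty=\frac{\tau\sigma^2}{N}\sum_i\frac{1}{1-\xi_i}$ is a sum of positive terms, by the mediant inequality it suffices to bound, for each $i\geq1$, the ratio of the $i$-th numerator contribution to the $i$-th denominator contribution, namely
$$\xi_i^{2t}\left(\alpha-\frac{1}{1-\xi_i}\right)(1-\xi_i)=\xi_i^{2t}\bigl(\alpha(1-\xi_i)-1\bigr).$$
Using $1-\xi_i<2$ to get $\alpha(1-\xi_i)-1\leq 2\alpha$, together with the spectral bound $|\xi_i|\leq1-\tau\gamma=1-1/\alpha$ established in Proposition~\ref{prop:convergence} (whence $\xi_i^{2t}\leq(1-1/\alpha)^{2t}$), this produces
$$\frac{\Htn-H_\infty}{H_\infty}\leq 2\alpha\left(1-\frac{1}{\alpha}\right)^{2t}.$$

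Finally I would solve the sufficient condition $2\alpha(1-1/\alpha)^{2t}\leq\eps$. Taking logarithms yields $t\geq\frac{\log(2\alpha/\eps)}{-2\log(1-1/\alpha)}$, and the elementary inequality $-\log(1-x)\geq x$ on $(0,1)$, applied with $x=1/\alpha$, replaces $-\log(1-1/\alpha)$ by $1/\alpha$ and gives the stated bound $t\geq\frac{\alpha}{2}\log(2\alpha/\eps)$. Any such $t$ forces $\Htn<(1+\eps)H_\infty$ (the inequality is in fact strict, since $1-\xi_i<2$ strictly for every surviving mode), so this value dominates $t^*_\eps$, which is the claim. I expect the only genuine subtlety to be the term-by-term comparison in the middle step: choosing to divide the already-present factor $\frac{1}{1-\xi_i}$ out of each summand is exactly what keeps the estimate uniform in $N$; everything else is routine manipulation of inequalities already available from the earlier propositions.
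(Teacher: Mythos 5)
Your proposal is correct and follows essentially the same route as the paper: the same two spectral facts ($|\xi_i|\le 1-\tau\gamma=1-1/\alpha$ and $1-\xi_i<2$) reduce the problem to the same sufficient condition $2\alpha(1-1/\alpha)^{2t}\le\eps$, which is then inverted using the same elementary inequality $-\log(1-1/\alpha)\ge 1/\alpha$. The bookkeeping differs only cosmetically: you subtract $H_\infty$ exactly (so the $\xi_0$ mode cancels) and compare the sums via the mediant inequality, whereas the paper drops the nonnegative part $\xi_i^{2t}/(1-\xi_i)$ of each term and bounds both $N$-term sums by uniform term-wise estimates so that $N$ cancels --- the same mechanism for obtaining an $N$-independent bound, in slightly different clothing.
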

\begin{proof}
From the definition we immediately deduce that 
\begin{align*}
t^*_\eps=\inf\setdef{t}{ \sum_{i=0}^{N-1} \bigg(\alpha\xi_i^{2t} + \frac{1- \xi_i^{2t}}{1-\xi_i}\bigg)	<  \sum_{i=0}^{N-1} \frac{1+\eps}{1-\xi_i}
}.
\end{align*}
By taking an upper bound on the second term of the left-hand side of the inequality, we have 
\begin{align*}
t^*_\eps\le\inf\setdef{t}{	 \sum_{i=0}^{N-1} \alpha \xi_i^{2t}<  \sum_{i=0}^{N-1} \frac{\eps}{1-\xi_i}
}.
\end{align*}
Since Assumption~\ref{ass:on-tau} implies $|\xi_i|\le 1-\tau\gamma$ and $\frac1{1-\xi_i}\ge \frac1{2-\tau\gamma}>\frac12$, we have 
\begin{align*}
t^*_\eps\le\inf\setdef{t}{	\alpha (1-\tau\gamma)^{2t}<\frac\eps2}.
\end{align*}
By solving for $t$ in the above inequality we get 
\begin{align*}
t^*_\eps \le \frac{  \log{\left( \frac{2 \alpha}{\eps}\right)} }{  \log\frac1{(1 - \alpha^{-1} )^2} }, 
\end{align*}
and then the result follows.
\end{proof}

\section{Simulations and comparison with~\cite{WSR-PF-FF:12}}\label{sect:simulations}

We have simulated algorithm~\eqref{eq:gradient-algo} and numerically evaluated the related performance metrics, assuming the graph to be a cycle.
Figure~\ref{fig:t-epsilon} compares the simulated and expected performance of the algorithm. Notice that, although the expected error $H_t$ is monotonic, single realizations need not to be monotonic, and indeed some of them show a minimum. 
The figure also shows the actual near-optimal time $t^*_\eps$, in comparison with its estimate obtained in Proposition~\ref{prop:bound-t-eps}. We notice that the estimate is significantly larger than the true value: this looseness is not surprising, as our bound does not exploit  any information about the topology of the sensing and communication graph, which is likely to have a role. Hence, future research may improve upon our bounds by a careful use of information about the spectrum of $Q$, {\it i.e.}, on the graph.

\begin{figure}
\psfrag{stima}{\footnotesize\eqref{eq:t-eps-estimate}}
\psfrag{cross}{\footnotesize$t^*_\eps$}
\psfrag{time}{\footnotesize$t$}
\psfrag{teo-new}{\footnotesize$H_t$}
\psfrag{realizations}{\footnotesize$\frac{1}{N}\|\xx[t]-\bar\xx\|_2^2$}
\includegraphics[width=\columnwidth]{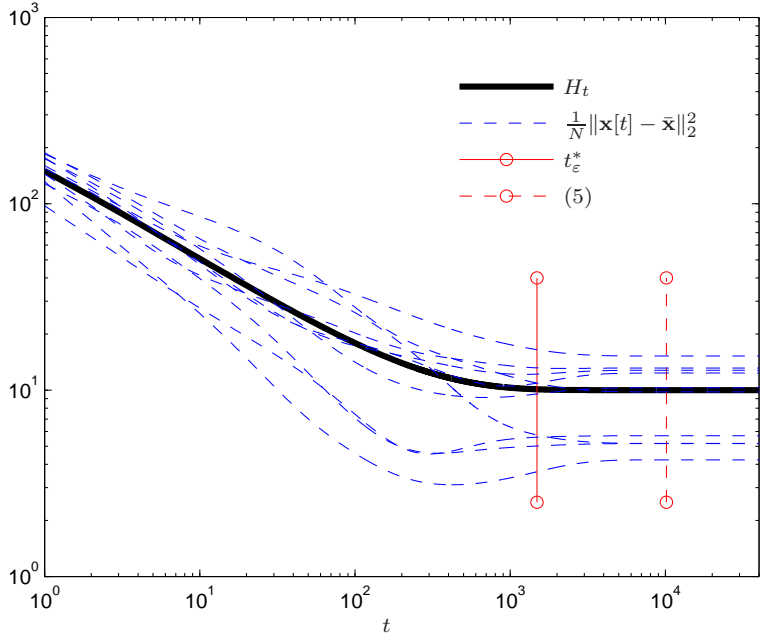}
\caption{Mean square error of algorithm~\ref{fig:t-epsilon} on a cycle graph with $N=160$, $\nu=20$, $\sigma=1$, $\eps=0.01$.}
\label{fig:t-epsilon}
\end{figure}

The second goal of our simulations is to compare algorithm~\eqref{eq:gradient-algo} with the analogous algorithm defined in~\cite[Eq.~(1)]{WSR-PF-FF:12}, based on based on minimizing $\Psi(\xx)$. 
Hence, Figure~\ref{fig:compare-algos} plots for both algorithms the mean square error, together with the mean square error of a few single realizations. We can see that the performance of the two algorithms is roughly similar (in expectation) until the algorithm in~\cite{WSR-PF-FF:12} reaches a time at which its mean square error is minimal. From that time on, the behavior of the two algorithms becomes different, as algorithm~\cite[Eq.~(1)]{WSR-PF-FF:12} accumulates an increasingly larger mean square error, whereas the error of algorithm~\eqref{eq:gradient-algo} decreases further.
We leave to future research a more detailed comparison of the two algorithms, which should include a discussion on the behavior of single realizations, as opposed to the average performance which has been studied so far.

\begin{figure}
\psfrag{time}{\footnotesize$t$}
\psfrag{exp-old}{\footnotesize \cite[Eq.~(1)]{WSR-PF-FF:12} (mean)}
\psfrag{exp-new}{\footnotesize \eqref{fig:t-epsilon} (mean)}

\psfrag{old}{\footnotesize \cite[Eq.~(1)]{WSR-PF-FF:12} (realiz.)}
\psfrag{new}{\footnotesize \eqref{fig:t-epsilon} (realization)}

\includegraphics[width=\columnwidth]{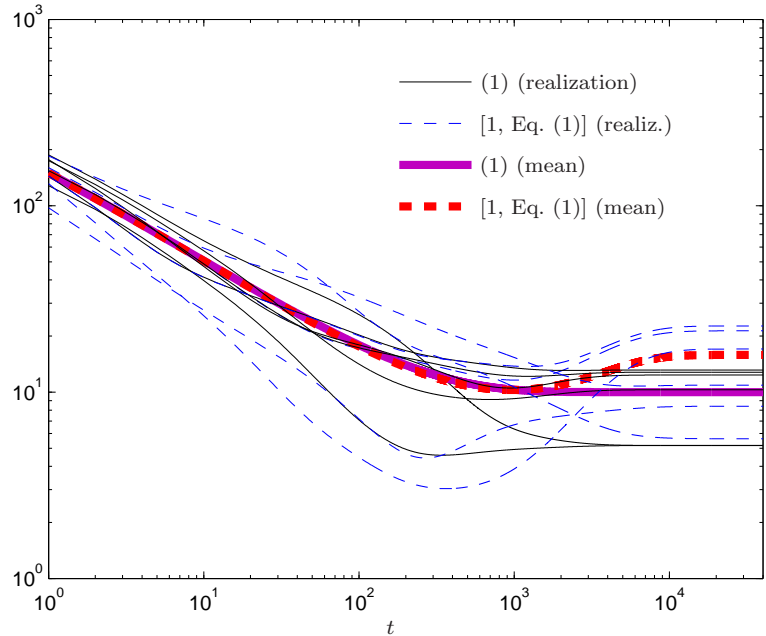}
\caption{Mean square error of algorithms~\eqref{fig:t-epsilon} and~\cite[Eq.~(1)]{WSR-PF-FF:12} on a cycle graph with $N=160$, $\nu=20$, and $\sigma=1$. }
\label{fig:compare-algos}
\end{figure}

\section{Conclusion}
In this paper, we have studied a distributed algorithm to solve the relative localization problem in sensor networks. Compared to algorithms available on literature, the proposed algorithm has an improved performance for large times: moreover, the algorithm is guaranteed to reach (on average) an $\eps$-approximation of the optimal solution within a time which only grows logarithmically in $\eps$ and does not depend on either the topology of the sensor network or the number of sensors. We interpret this feature as an inherent limitation on the benefit of cooperation.
Future research should put our results in a broader context, investigating the fundamental issue of quantifying the benefit of cooperation (if any), depending on the ``cooperation task'' which is assigned to the agents, as well as on the available communication and the measurement models: a recent example of work in this direction is~\cite{AS-JAM-RDA:12}.


%

\begin{thebibliography}{10}
\providecommand{\url}[1]{#1}
\csname url@samestyle\endcsname
\providecommand{\newblock}{\relax}
\providecommand{\bibinfo}[2]{#2}
\providecommand{\BIBentrySTDinterwordspacing}{\spaceskip=0pt\relax}
\providecommand{\BIBentryALTinterwordstretchfactor}{4}
\providecommand{\BIBentryALTinterwordspacing}{\spaceskip=\fontdimen2\font plus
\BIBentryALTinterwordstretchfactor\fontdimen3\font minus
  \fontdimen4\font\relax}
\providecommand{\BIBforeignlanguage}[2]{{%
\expandafter\ifx\csname l@#1\endcsname\relax
\typeout{** WARNING: IEEEtran.bst: No hyphenation pattern has been}%
\typeout{** loaded for the language `#1'. Using the pattern for}%
\typeout{** the default language instead.}%
\else
\language=\csname l@#1\endcsname
\fi
#2}}
\providecommand{\BIBdecl}{\relax}
\BIBdecl

\bibitem{WSR-PF-FF:12}
W.~S. Rossi, P.~Frasca, and F.~Fagnani, ``Transient and limit performance of
  distributed relative localization,'' in \emph{{IEEE} Conference on Decision
  and Control}, Maui, HI, USA, Dec. 2012, pp. 2744--2748.

\bibitem{PB-JPH:07}
P.~Barooah and J.~P. Hespanha, ``Estimation from relative measurements:
  {A}lgorithms and scaling laws,'' \emph{{IEEE} Control Systems Magazine},
  vol.~27, no.~4, pp. 57--74, 2007.

\bibitem{PB-JPH:08}
------, ``Estimation from relative measurements: {E}lectrical analogy and large
  graphs,'' \emph{IEEE Transactions on Signal Processing}, vol.~56, no.~6, pp.
  2181--2193, 2008.

\bibitem{PB-JPH:09}
------, ``Error scaling laws for linear optimal estimation from relative
  measurements,'' \emph{IEEE Transactions on Information Theory}, vol.~55,
  no.~12, pp. 5661--5673, 2009.

\bibitem{AG-PRK:06a}
A.~Giridhar and P.~R. Kumar, ``Distributed clock synchronization over wireless
  networks: Algorithms and analysis,'' in \emph{{IEEE} Conference on Decision
  and Control}, San Diego, CA, USA, Dec. 2006, pp. 4915--4920.

\bibitem{RC-AC-LS-SZ:11}
R.~Carli, A.~Chiuso, L.~Schenato, and S.~Zampieri, ``Optimal synchronization
  for networks of noisy double integrators,'' \emph{IEEE Transactions on
  Automatic Control}, vol.~56, no.~5, pp. 1146 --1152, 2011.

\bibitem{NMF-SRG-PRK:11}
N.~M. Freris, S.~R. Graham, and P.~R. Kumar, ``Fundamental limits on
  synchronizing clocks over networks,'' \emph{IEEE Transactions on Automatic
  Control}, vol.~56, no.~2, pp. 1352-- 1364, 2011.

\bibitem{SB-SDF-LS-DV:10}
S.~Bolognani, S.~D. Favero, L.~Schenato, and D.~Varagnolo, ``Consensus-based
  distributed sensor calibration and least-square parameter identification in
  {WSN}s,'' \emph{International Journal of Robust and Nonlinear Control},
  vol.~20, no.~2, pp. 176--193, 2010.

\bibitem{NMF-AZ:12}
N.~M. Freris and A.~Zouzias, ``Fast distributed smoothing of relative
  measurements,'' in \emph{{IEEE} Conference on Decision and Control}, Maui,
  HI, USA, Dec. 2012, pp. 1411--1416.

\bibitem{CR-PF-HI-RT:13a}
C.~Ravazzi, P.~Frasca, H.~Ishii, and R.~Tempo, ``A distributed randomized
  algorithm for relative localization in sensor networks,'' in \emph{{E}uropean
  {C}ontrol {C}onference}, 2013, to appear.

\bibitem{RC-LS:12}
\BIBentryALTinterwordspacing
R.~Carli and L.~Schenato, ``Exponential-rate consensus-based algorithms for
  estimation from relative measurements,'' University of Padova, Tech. Rep.,
  Sep. 2012. [Online]. Available:
  \url{http://automatica.dei.unipd.it/people/schenato/publications.html}
\BIBentrySTDinterwordspacing

\bibitem{SB-LV:04}
S.~P. Boyd and L.~Vandenberghe, \emph{Convex Optimization}.\hskip 1em plus
  0.5em minus 0.4em\relax Cambridge University Press, 2004.

\bibitem{FG-SZ:11}
F.~Garin and S.~Zampieri, ``Mean square performance of consensus-based
  distributed estimation over regular geometric graphs,'' \emph{SIAM Journal on
  Control and Optimization}, vol.~50, no.~1, pp. 306--333, 2012.

\bibitem{AS-JAM-RDA:12}
A.~P. Schoellig, J.~Alonso-Mora, and R.~D'Andrea, ``Limited benefit of joint
  estimation in multi-agent iterative learning,'' \emph{Asian Journal of
  Control}, vol.~14, no.~3, pp. 613--623, 2012.

\end{thebibliography}

\end{document}